\let\leftold\left
\let\rightold\right
\renewcommand{\left}{\mathopen{}\mathclose\bgroup\leftold}
\renewcommand{\right}{\aftergroup\egroup\rightold}
\crefname{algorithm}{Algorithm}{Algorithms}
\crefname{definition}{Definition}{Definitions}
\crefname{figure}{Figure}{Figures}
\crefname{observation}{Observation}{Observations}
\crefname{section}{Section}{Sections}
\crefname{theorem}{Theorem}{Theorems}
\crefname{lemma}{Lemma}{Lemmat}
\crefname{corollary}{Corollary}{Corollary}
\crefname{equation}{}{}
\crefname{enumi}{}{}
\newcommand\algo[1]{\ensuremath{\textsc{Alg}\left[{#1}\right]}\xspace}
\newcommand\algoinline[1]{\ensuremath{\textsc{Alg}[{#1}]}\xspace}
\newcommand\ie{i.\,e.\xspace}
\def\alg{\ensuremath{\textsc{Alg}}\xspace}
\def\opt{\ensuremath{\textsc{Opt}}\xspace}
\def\OS{{\sc Online Search}\xspace}
\def\O2S{{\sc Online $2$-Search}\xspace}
\def\OWT{{\sc One-Way Trading}\xspace}
\title{Advice Complexity of the Online Search Problem}
\author{Jhoirene Clemente\inst{1}\thanks{Supported by ERDT Scholarship. Sandwich program funded by PCIEERD-BCDA.} \and Juraj Hromkovi\v{c}\inst{2}\thanks{Supported by SNF grant 200021-146372.} \and Dennis Komm\inst{2} \and Christian Kudahl\inst{3}\thanks{Supported by the Villum Foundation and the Stibo-Foundation.}}
 \institute{Department of Computer Science,\\University of the Philippines Diliman, Philippines,\\\email{jbclemente@up.edu.ph} \and
 Department of Computer Science,\\ETH Z\"urich, Switzerland,\\\email{\{juraj.hromkovic,dennis.komm\}@inf.ethz.ch} \and
 Department of Mathematics and Computer Science,\\University of Southern Denmark, Denmark,\\\email{kudahl@imada.sdu.dk}}
\begin{document}
 
\maketitle
 
\begin{abstract}
  The online search problem is a fundamental problem in finance.  The numerous
  direct applications include searching for optimal prices for commodity trading 
  and trading foreign currencies.  In this paper, we analyze the advice
  complexity of this problem.
  In particular, we are interested in identifying the minimum amount of information needed in order to 
  achieve a certain competitive ratio.
  We design an algorithm that reads $b$ bits of advice and achieves a
  competitive ratio of $(M/m)^{1/(2^b+1)}$ where $M$ and $m$ are the maximum and
  minimum price in the input.  We also give a matching lower bound.
  Furthermore, we compare the power of advice and randomization for this problem.
\end{abstract}

\section{Introduction}

We study the online search problem (abbreviated \OS), which is formulated as
an online (profit) maximization problem.
For such problems, the input arrives gradually in consecutive time steps.
Each piece of input is called a \emph{request}.  After a request is given, an online
algorithm (also called the \emph{online player}) has to produce a definite
piece of the output, called an \emph{answer}.  Each answer is thus computed
without any knowledge about further requests \cite{BE98}.  The goal is to produce
an output with a \emph{profit} that is as large as possible.
In \OS, the online player searches for the maximum price of a
certain asset that unfolds sequentially.  Suppose the player, in this context a
trader,  would like to transfer its assets from, say, USD to CHF in one transaction.
Each day (formally, each time step), the trader receives a quotation of the current exchange rate and
decides whether to trade on the same day or to wait.  The trading duration is
finite, and it may be known or unknown to the trader.  Formally, we define \OS
as follows.

\begin{definition}[Online Search Problem] \label{def:online_search}
  Let $\sigma = (p_1,p_2,\ldots,p_n)$, with $0<m \leq p_i \leq M$ for all $1
  \leq i \leq n$, be a \emph{sequence of prices} that arrives in an online fashion.
  Here, $M$ and $m$ are upper and lower bounds on the prices, respectively.  For
  each day $i$, price $p_i$ is revealed, and the online player has to choose whether
  to trade on the same day or to wait for the new price quotation on the next
  day.  If the player trades on day $i$, its profit is $p_i$.
  If the player did not trade for the first $n-1$ days, it must accept $p_n$.  The
  player's goal is to maximize the obtained price (\ie, its \emph{profit}).
\end{definition} 

We assume that the parameters $m$ and $M$ for the
price range are fixed and known to the online algorithm in advance.  The duration of the
trading period $n$ is finite and may or may not be known to the online algorithm.
We do not take into account sampling costs in the profit, \ie, the price for
each day is freely given by the market to the trader.  However, some direct
applications of \OS may do require to consider the sampling
costs.  For instance, obtaining prices of a certain product may induce some
cost, either in the form of time or money, from the player.
For a study of such more involved cost variants, we refer the reader to Xu et al.\
\cite{Xu2011}, where the authors considered the accumulated sampling cost while
maximizing the player's profit.

\subsection{Competitive Analysis and Advice Complexity}

\textit{Competitive analysis} was introduced by Sleator and Tarjan
in 1985 \cite{ST85} to analyze the solution quality of online algorithms.  The
measure used in the analysis is called the \textit{competitive ratio}, which
can be  obtained by comparing the profit of the online algorithm to the one of an
optimal offline solution.  The term ``offline'' is used when the whole input
sequence is known in advance.  Note that it is generally not possible
for an online algorithm to compute the optimal
offline solution in advance, because parts of the output have to be specified before the whole
input is known.  It is merely taken into account to analyze the profit that
can hypothetically be obtained if the whole input is known in advance.  The
competitive ratio of an online algorithm is formally defined as follows.

\begin{definition}[Competitive Ratio]\label{def:competitive_analysis}
  Let $\Pi$ be an online maximization problem, let $\alg$ be an online algorithm
  for $\Pi$, and let $c>1$.  $\alg$ is said to be $c$-\textit{competitive} if, for every instance $I$
  of $\Pi$, we have
  \[ c\cdot\mathrm{profit}(\alg(I)) \geq \mathrm{profit}(\opt(I))\;, \]
  where $\mathrm{profit}(\alg(I))$ is the profit of $\alg$ on input $I$,
  and $\mathrm{profit}(\opt(I))$ denotes the optimal offline profit.
\end{definition}

In this paper, we study the \emph{advice complexity} of \OS.  More specifically, we
ask about the additional information both sufficient and necessary in order
to improve the obtainable competitive ratio.  In a way, this approach can
be seen as measuring the \emph{information content} of the problem at hand \cite{HKK10}.

This tool, which was introduced by Dobrev et al.\ in 2008 \cite{DKP08} and then revised
by B\"ockenhauer et al.\ \cite{BKKKM09}, Hromkovi\v{c}
et al.\ \cite{HKK10}, and Emek et al.\ \cite{EFKR11}, is a complementary
tool to analyze online problems.
In order to study the information that is needed in order to outperform purely
deterministic (or randomized) online algorithms, we introduce a trusted source, referred to as an \emph{oracle}, which sees the whole
input in advance and may write binary information on a so-called \emph{advice tape}. 
These \emph{advice bits} are allowed to be any function of the entire input.
The algorithm, which is called an \emph{online algorithm with advice} in this setting,
may then use the advice to compute the output for the given input.
The approach is quantitative
and problem-independent.  In other words, the information supplied can be arbitrary
(as long as it is computable).  This is in particular interesting to give lower bounds for
many other measurements or relaxations of online problems.
More specifically, hardness results in advice complexity give useful negative results
about various semi-online approaches.  If it is for example shown that $O(\log_2 n)$
bits of advice do not help any online algorithm to achieve a better competitive ratio,
this gives a negative answer to questions of the form: Would it help the algorithm to know the
length of the input?  Would it help the algorithm to know the number of requests
of a certain type?

Many prominent online problems have been studied in this framework, including
paging \cite{DKP08,BKKKM09}, the $k$-server problem \cite{BKKKM09,EFKR11,GKL2013,RR2011},
metrical task systems \cite{EFKR11}, and the online knapsack problem \cite{BKKR2014}.  Negative results on the advice
complexity can be transferred by a special kind of reduction \cite{BHKKSS2014,BFKM2015,EFKR11}.
Moreover, advice complexity has a close and non-trivial
relation to randomization \cite{BKKK11,KK11,DBLP:journals/corr/Mikkelsen15}. 
We now define online algorithms with advice formally.

\begin{definition}[Advice Complexity]
  Let $x_1, \ldots, x_n$ be the input for an online problem $\Pi$.  An \emph{online
  algorithm with advice}, \alg, for $\Pi$ computes the output sequence $y_1, \ldots, y_n$,
  where $y_i$ is allowed to depend on $x_1, \ldots, x_{i-1}$ as well as on an
  \emph{advice string} $\phi$.  The advice, $\phi$, is written in binary on an infinite
  tape and is allowed to depend on the request sequence $x_1, \ldots, x_n$.  The
  advice complexity of \alg is the largest number of advice bits it reads from
  $\phi$ over all inputs of length at most~$n$.
\end{definition}

Our paper is devoted to both creating online algorithms with advice for \OS that
achieve a certain output quality while using a certain number of advice bits,
and to show that such algorithms cannot exist if the advice complexity is below
some certain threshold.

Most of the work in advice complexity theory considers problems where at least $n$
advice bits are required for an algorithm to be optimal.  Here, we study a problem
where only $\log_2 n$ bits give an optimal algorithm.
We investigate how this problem behaves when the number of advice bits is in the interval $[1, \log_2 n]$.
For the ease of presentation, we assume that $\log_2 n$ is integer.

\section{Related Work}

The search problem in an offline setting, \ie, where the set of prices is known
in advance, can easily be solved optimally in time $O(n)$.  However, for a lot
of online environments such as stock trading and foreign exchange, decisions
should be made even though there is no knowledge of the future prices of the
currencies.  These problems are intrinsically online. 

The most common approaches are Bayesian.  These approaches rely on a prior
distribution of prices where the online algorithm computes a certain
\textit{reservation price} based on the distribution.  The trader accepts any
price that is larger than or equal to the reservation price.  If this certain
price is not met, the player has to trade on the last day (according to \cref{def:online_search}).  Throughout this paper, \algo{p}
denotes the algorithm that accepts the first price it sees that is at least~$p$.

Since the prior distribution of prices is not necessarily known in advance,
El-Yaniv et al.\ \cite{El-Yaniv2001} proposed to measure the quality of online
trading algorithms using competitive analysis.  Moreover, for some assets, the
goal is not just to increase the profit but to minimize the loss by considering
the possible worst-case scenarios in the market.  Competitive analysis in
financial problems such as \OS can provide a guaranteed performance measure for
the trader's profit.
The best deterministic online algorithm with respect to competitive analysis
is $\algoinline{\sqrt{Mm}}$, \ie, the algorithm that accepts the first
price it sees that is at least $\sqrt{Mm}$ (or it accepts $p_n$ if no
such price is ever seen). 
This algorithm has a competitive ratio of $\sqrt{M/m}$, which is provably the
best competitive ratio any deterministic online algorithm without advice can
achieve \cite{BE98}. 

Boyar et al.\ \cite{Boyar2012} studied how the problem behaves when applying
a variety of difference performance measures (and not just competitive ratio).

\section{Advice for the Online Search Problem}

In this section, we explore the advice complexity of \OS.  We start by studying
how much advice is necessary and sufficient in order to obtain an optimal
output.  After that, we study general $c$-competitiveness.

\subsection{Advice for Optimality}

It is possible for an algorithm to be optimal using $\log_2 n$ bits of advice
if $n$ is known in advance by simply encoding the day where the largest price is offered.
If $n$ is not known in advance, it has to be encoded with a self-delimiting encoding,
for example, by writing the length of $\log_2 n$ in unary followed by $\log_2 n$.  This requires $2 \log_2 n$ bits \cite{BKKKM09}.

Moreover, optimality can also be achieved by encoding the value of $p_{\text{max}}$  using $O(\log_2(M/m))$
bits, but since $M$ and $m$ can be arbitrarily large, this may be very expensive.
We now give a complementing lower bound.

\begin{theorem}\label{thm:os_lb_opt1}
  At least $\log_2 n$ bits of advice are necessary to obtain an optimal
  solution for \OS. This holds even if $n$ is known to the algorithm.
\end{theorem}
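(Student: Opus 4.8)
The plan is to use a standard adversary/counting argument showing that fewer than $\log_2 n$ bits of advice force two distinct inputs to be answered identically, yet those inputs demand different behavior for optimality.

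\medskip

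\noindent\textbf{Construction of a hard instance family.}
First I would construct a family of $n$ input sequences $\sigma_1, \ldots, \sigma_n$, one for each possible ``location'' of the maximum. The natural choice is to make the prices strictly increasing up to some day $j$ and then drop, so that the unique optimal action is to trade precisely on day $j$. Concretely, one can let the price on day $i$ be an increasing function of $i$ (staying within $[m,M]$) up to day $j$, and then let the remaining prices $p_{j+1}, \ldots, p_n$ be strictly smaller than $p_j$ (for instance, all equal to $m$). Crucially, the sequences must share a common prefix structure: $\sigma_j$ and $\sigma_{j'}$ for $j < j'$ should agree on the first $j$ prices. This way, an online player cannot distinguish $\sigma_j$ from $\sigma_{j'}$ until after day $j$ has passed, so the decision whether to trade on day $j$ must be committed without knowing whether the sequence is $\sigma_j$ (where trading now is optimal) or $\sigma_{j'}$ (where waiting is optimal).

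\medskip

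\noindent\textbf{The counting/pigeonhole step.}
With $b$ advice bits, the oracle can write at most $2^b$ distinct advice strings, and hence the algorithm exhibits at most $2^b$ distinct deterministic behaviors. I would argue that to be optimal on every $\sigma_j$, the algorithm must behave differently on each of the $n$ sequences: on $\sigma_j$ it must accept exactly $p_j$ and not wait, which pins down a distinct ``stopping day'' for each instance. If $2^b < n$, by pigeonhole two distinct instances $\sigma_j$ and $\sigma_{j'}$ (say $j<j'$) receive the same advice string. Since they share the prefix through day $j$ and the algorithm reads the same advice, it behaves identically on both through day $j$; in particular it makes the same trade/wait decision on every day up to $j$. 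On at least one of the two instances this shared behavior is suboptimal — if it trades at or before day $j$ it loses on $\sigma_{j'}$ (whose true maximum $p_{j'}>p_j$ comes later), and if it waits past day $j$ it loses on $\sigma_j$ (whose maximum is $p_j$). This contradiction forces $2^b \ge n$, i.e.\ $b \ge \log_2 n$.

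\medskip

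\noindent\textbf{Remarks on the obstacle and on the $n$-known clause.}
The main delicacy is ensuring that ``optimal'' genuinely requires a distinct action on each instance, so the pigeonhole bites. This is why the construction should make the optimum \emph{uniquely} achieved at day $j$ (strict inequalities among the relevant prices), ruling out the possibility that one action is simultaneously optimal for several instances. One must also confirm that the prices can be chosen within $[m,M]$ with the required strict ordering, which is immediate as long as $M>m$ (any $n$ values can be squeezed into the interval). Finally, since the entire family shares the same length $n$ and this $n$ is the one fed to the algorithm, the argument goes through verbatim even when $n$ is known in advance; knowing $n$ gives the algorithm no help in distinguishing which of the equal-length sequences it faces. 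Hence the lower bound of $\log_2 n$ holds regardless of whether $n$ is known.
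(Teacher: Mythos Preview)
Your proposal is correct and follows essentially the same approach as the paper: you construct $n$ prefix-compatible instances with strictly increasing prices up to a unique peak day followed by a drop to $m$, then apply pigeonhole on the $2^b$ possible advice strings to find two instances that receive identical advice but require different stopping days. The paper's proof is the same argument with the concrete choice $p_i = m + i\delta$ for $\delta = (M-m)/n$.
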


\begin{proof}
  We use that an algorithm with $b$ advice bits can be viewed as dealing with the best of $2^b$ algorithms without advice, for the particular instance chosen.
  First, we generate a set of request sequences $\mathcal{S}$. Then, we
  show that, for $\mathcal{S}$, there is no set of $n-1$ or fewer deterministic
  algorithms, which can ensure that at least one algorithm always gets the optimal solution.
  
  We construct the set $\mathcal{S}$ in such a way that
  each request has a unique optimal solution.  The construction is as follows.
  Let $\mathcal{S} = \{\sigma_1, \sigma_2, \ldots, \sigma_n \}$, such that
  \[ \sigma_i =(\underbrace{ m+ \delta ,m+2 \delta, \ldots, m+i \delta}_{i}, \underbrace{ m, \ldots, m}_{n-i})\;, \]
  where $\delta = (M-m)/n$.  Each $\sigma_i$ is thus a sequence of $n$
  prices that follow an increasing order until day $i$.
  Then the price drops to the minimum $m$ for the remaining $n-i$ days.  The optimal solution for
  each $\sigma_i$ clearly is to trade on day $i$ and obtain a profit of
  $m+i\delta$. 
  From the construction, it is impossible for any deterministic online algorithm to
  distinguish the request sequence $\sigma_i$ from any other sequence of requests $\sigma_j$, for
  $j>i$, until the price for day $i+1$ is offered.  This is due to the fact
  that the set of requests $\{\sigma_i, \sigma_{i+1}, \ldots, \sigma_n\}$ have
  the same prices offered from day $1$ up to day $i$.  
  Since we have $n$ such input instances with different optimal solutions, 
  and fewer than $n$ algorithms, there is one algorithm that gets chosen for
  at least one the above instances.  Clearly, this algorithm cannot be optimal
  for both these instances.  Thus, any online algorithm with advice
  needs $\log_2 n$ bits of advice to identify the actual
  input from these $n$ possible cases.
  \qed
\end{proof}

Note that if it is required that the prices are integral, this construction still works by picking $m$ and $M$ such that
$\delta$ is an integer.
 
\subsection{Advice for $c$-Competitiveness}

Next, we investigate the advice complexity of \OS if we have less than
$\log_2 n$ advice bits.  This means we study a tradeoff between the number $b$
of advice bits supplied and the competitive ratio $c$ obtainable.
Recall that, without advice bits, the optimal trader
strategy is to use \algo{p}, where the reservation price is $p = \sqrt{Mm}$.

Before we present the upper bounds for online algorithms with advice for \OS that achieve $c$-competitiveness, we give a
simple intuition behind our strategy.
We can think of it as having $2^b$ deterministic algorithms with
different reservation prices.  The computation of each reservation price $p_i$
is obtained by computing the solution of the following equation.

\[ \frac{p_1}{m} = \frac{p_2}{p_1} = \ldots =\frac{p_{2^i}}{p_{2^i-1}} =\ldots = \frac{M}{p_{2^b}} \]

\begin{theorem} \label{upper}
  For every $b>0$, there exists an online algorithm with advice for \OS which
  reads $b$ bits of advice and achieves a competitive ratio of at most
  $(M/m)^{\frac{1}{2^b +1}}$.  This holds even if $n$ is unknown.
\end{theorem}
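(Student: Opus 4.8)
The plan is to realize the intuition preceding the statement: I would define $2^b$ deterministic algorithms $\algo{p_1}, \dots, \algo{p_{2^b}}$ whose reservation prices form a geometric progression, and use the $b$ advice bits to select whichever one is best suited to the actual instance. Writing $r = (M/m)^{1/(2^b+1)}$ and $p_k = m \cdot r^{k}$ for $k = 1, \dots, 2^b$ solves the displayed chain of equalities, since each consecutive ratio---including the boundary ratios $p_1/m$ and $M/p_{2^b}$---then equals $r$. Setting $p_0 := m$ and $p_{2^b+1} := M$, the points $p_0, p_1, \dots, p_{2^b+1}$ cut the price range $[m,M]$ into $2^b+1$ geometric bands, each with endpoint ratio $r$.

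First I would describe the oracle. On a given input it computes the maximum price $p_{\max}$ and writes in binary the index $j \in \{1, \dots, 2^b\}$ of the reservation price that \alg should use; this is exactly $b$ bits. The selection rule is to pick the largest $j$ with $p_j \le p_{\max}$, with the single exception that if $p_{\max} < p_1$ the oracle still writes $j = 1$. The algorithm then simply runs $\algo{p_j}$: it accepts the first price it sees that is at least $p_j$, and is forced to accept $p_n$ only if no such price ever appears. This strategy never refers to $n$, so the construction works even when $n$ is unknown.

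Next I would bound the competitive ratio by a case analysis on the band containing $p_{\max}$. If $p_{\max} \in [p_j, p_{j+1})$ for some $j \ge 1$, then $\algo{p_j}$ accepts some price of value at least $p_j$, whereas the optimal profit is $p_{\max} < p_{j+1} = p_j \cdot r$, so the ratio stays below $r$. The remaining case is $p_{\max} \in [m, p_1)$, where the oracle writes $j=1$: here $\algo{p_1}$ never sees a qualifying price and is forced to take $p_n \ge m$, while the optimal profit is $p_{\max} < p_1 = m \cdot r$, so the ratio is again below $r$. In every case the competitive ratio is at most $r = (M/m)^{1/(2^b+1)}$, as claimed.

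The step I expect to be the main obstacle is the counting mismatch: there are $2^b+1$ bands but only $2^b$ advice values, so the bands cannot be matched to reservation prices one-to-one. The resolution is the observation that the bottommost band needs no dedicated reservation price---since $m$ itself already lies within a factor $r$ of $p_1$, the forced last-day price $p_n$ is good enough there---so the two lowest bands can safely share the advice value $j=1$. Verifying this merge, and checking that the forced acceptance of $p_n$ still yields the bound, is where the argument has to be done with care.
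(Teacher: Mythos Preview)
Your proposal is correct and follows essentially the same approach as the paper: the reservation prices $p_k = m\cdot r^k$ you define coincide with the paper's thresholds $m^{(2^b+1-i)/(2^b+1)}M^{i/(2^b+1)}$, and your case analysis on the band containing $p_{\max}$ mirrors the paper's three cases (your single case $p_{\max}\in[p_j,p_{j+1})$ for $j\ge 1$ absorbs the paper's second and third cases). One trivial edge to tidy up: your half-open bands leave $p_{\max}=M$ uncovered, where the ratio is exactly $r$ rather than strictly below it; closing the top interval handles this.
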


\begin{proof}
  We describe an algorithm \alg with advice which reads $b$ bits of advice and achieves the claimed competitive ratio.
  First, the oracle simulates the algorithms 
  \[ \algo{m^{\frac{2^b+1-i}{2^b+1}}M^\frac{i}{2^b+1}} \]
  for $i=1, \ldots, 2^b$.
  Let $A$ denote the set of these algorithms.
  Then, it writes the value of $i$ for the algorithm that achieves the best competitive ratio.
  We argue that at least one of the algorithms gets a competitive ratio of at most
  \[ \left( \frac{M}{m} \right)^{\frac{1}{2^b+1}}\;. \]
  
  We have three cases for $p_{\text{max}}$. The first case is when $p_{\text{max}}<m^{\frac{2^b}{2^b+1}}M^\frac{1}{2^b+1}$.
  Here, each algorithm in $A$ will get the price offered on the last day, which is at least $m$.
  The competitive ratio for \alg is at most
  \[ \frac{m^{\frac{2^b}{2^b+1}}M^{\frac{1}{2^b+1}}}{m}=\left( \frac{M}{m} \right)^{\frac{1}{2^b+1}}\;. \]
  
  The second case is when $p_{\text{max}} \geq m^{\frac{1}{2^b+1}}M^\frac{2^b}{2^b+1}$.
  In this case,
  \[ \algo{m^{\frac{1}{2^b+1}}M^\frac{2^b}{2^b+1}} \]
  gets a price of at least $m^{\frac{1}{2^b+1}}M^{\frac{2^b}{2^b+1}}$.
  Since \opt gets at most $M$, the competitive ratio for \alg is again at most
  \[ \frac{M}{m^{\frac{1}{2^b+1}}M^{\frac{2^b}{2^b+1}}}=\left( \frac{M}{m} \right)^{\frac{1}{2^b+1}}\;. \]
  
  The last case is when $m^{\frac{2^b+1-i}{2^b+1}}M^\frac{i}{2^b+1} \leq p_{\text{max}} < m^{\frac{2^b-i}{2^b+1}}M^\frac{i+1}{2^b+1}$ for some $i<2^b$.
  In this case, 
  \[ \algo{m^{\frac{2^b+1-i}{2^b+1}}M^\frac{i}{2^b+1}} \]
  gets at least its reservation price. 
  Thus, also here, the competitive ratio for \alg is at most
  \[ \frac{m^{\frac{2^b-i}{2^b+1}}M^\frac{i+1}{2^b+1}}{m^{\frac{2^b+1-i}{2^b+1}}M^\frac{i}{2^b+1}}=\left( \frac{M}{m} \right)^{\frac{1}{2^b+1}}\;. \]
  
  All in all, we have shown that, in each case, \alg obtains a competitive ratio of at most $(M/m)^{\frac{1}{2^b+1}}$ as we claimed.
  \qed
\end{proof}

We now present a matching lower bound.

\begin{theorem} \label{lower}
  Let \alg be an algorithm with advice for \OS which reads $b<\log_2 n$ bits of advice.
  The competitive ratio of \alg is at least $(M/m)^{\frac{1}{2^b+1}}$.
\end{theorem}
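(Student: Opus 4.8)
The plan is to reuse the ``best of $2^b$ deterministic algorithms'' viewpoint from the proof of \cref{thm:os_lb_opt1}: an algorithm reading $b$ advice bits behaves, on each fixed instance, as one of at most $2^b$ fixed deterministic algorithms $A_1,\dots,A_{2^b}$, with the oracle selecting the most favorable one for that instance. It therefore suffices to exhibit a family of instances such that, for \emph{every} choice of $2^b$ deterministic algorithms, at least one instance forces the selected algorithm to a profit ratio of at least $(M/m)^{1/(2^b+1)}$.

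Write $r = (M/m)^{1/(2^b+1)}$ and set the geometric levels $q_j = m\,r^{j}$ for $j=0,1,\dots,2^b+1$, so that $q_0=m$ and $q_{2^b+1}=M$, with consecutive levels differing by exactly the factor $r$. For $i=1,\dots,2^b+1$ define the instance
\[ \sigma_i = (\,\underbrace{q_1,q_2,\dots,q_i}_{i},\underbrace{m,\dots,m}_{n-i}\,)\;, \]
which rises geometrically to its peak $q_i$ on day $i$ and then collapses to $m$. All prices lie in $[m,M]$, and since $b<\log_2 n$ we have $2^b+1\le n$, so all $2^b+1$ instances have the required length. The optimum on $\sigma_i$ is $q_i$ (attained on day $i$), and crucially the instances share prefixes: $\sigma_i$ and $\sigma_{i'}$ agree on days $1,\dots,\min(i,i')$, so no deterministic algorithm can tell them apart before the peak is passed.

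Next I would pin down, for a fixed deterministic algorithm $A$, the single instance on which it can do well. Run $A$ on the pure increasing sequence $(q_1,\dots,q_{2^b+1})$; being forced to trade by the last day, it accepts on some day $d_A\in\{1,\dots,2^b+1\}$ at price $q_{d_A}$. By the prefix-sharing property, on any $\sigma_i$ with $i\ge d_A$ the algorithm still accepts on day $d_A$ and earns $q_{d_A}$, giving ratio $q_i/q_{d_A}=r^{\,i-d_A}$; on any $\sigma_i$ with $i<d_A$ it has not traded by day $i$, so it is stuck with $m$ on the tail, giving ratio $q_i/m=r^{\,i}$. In either case the ratio equals $1$ when $i=d_A$ and is at least $r$ whenever $i\ne d_A$, because the relevant exponent ($i-d_A$ or $i$) is then at least $1$. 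Hence each $A$ beats ratio $r$ on at most the single instance $\sigma_{d_A}$.

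The conclusion then follows by counting: the $2^b$ algorithms $A_1,\dots,A_{2^b}$ have acceptance days $d_{A_1},\dots,d_{A_{2^b}}$, covering at most $2^b$ of the $2^b+1$ indices, so some index $i^*\in\{1,\dots,2^b+1\}$ equals no $d_{A_j}$. On $\sigma_{i^*}$ every algorithm incurs ratio at least $r$, and therefore the oracle's best choice also incurs ratio at least $r=(M/m)^{1/(2^b+1)}$, which bounds the competitive ratio of \alg from below. The part requiring the most care is the acceptance-day analysis of the previous paragraph: one must argue rigorously from prefix-indistinguishability that a deterministic algorithm can be optimal on only a single member of the family, since this is exactly what ties the count $2^b+1$ to the claimed exponent.
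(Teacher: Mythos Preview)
Your proof is correct and uses the same geometric levels $q_i=m(M/m)^{i/(2^b+1)}$ and the same ``$2^b$ deterministic algorithms'' viewpoint as the paper. The packaging differs slightly: the paper argues via an adaptive adversary that feeds $q_1,q_2,\dots$ and stops (padding with $m$'s) at the first level $q_i$ rejected by all $2^b$ algorithms, or, if no such level exists among $q_1,\dots,q_{2^b}$, appends a final request $M=q_{2^b+1}$; you instead fix the $2^b+1$ instances $\sigma_1,\dots,\sigma_{2^b+1}$ in advance and pigeonhole on the acceptance days $d_A$. These are two presentations of the same counting argument, and yours has the mild advantage of exhibiting a single non-adaptive instance family that defeats every $b$-bit advice algorithm.

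One small imprecision: when you ``run $A$ on the pure increasing sequence $(q_1,\dots,q_{2^b+1})$'' and claim it is ``forced to trade by the last day'', that is only literally true if $n=2^b+1$. For $n>2^b+1$ you should instead define $d_A$ as the first day on which $A$ accepts when fed $q_1,q_2,\dots$ as the prefix of a length-$n$ instance, allowing $d_A>2^b+1$ (or $d_A=\infty$). In that case $A$ earns at most $m$ on every $\sigma_i$, so it covers no index at all and the pigeonhole step is only strengthened.
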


\begin{proof}
  For any given $b < \log_2 n$, let $\alg$ be an algorithm with advice that reads at most $b$ bits of advice.  Again, we view this advice as $2^b$ deterministic online algorithms.
  We now give a class of request sequences that ensure that each of them gets a competitive ratio of at least
  \[\left( \frac{M}{m} \right)^{\frac{1}{2^b+1}}\;.\]
  
  Consider the sequence $( p_1, p_2, \ldots , p_{2^b} )$ with
  \[ p_i=m^{\frac{2^b+1-i}{2^b+1}}M^{\frac{i}{2^b+1}}\;. \]
  The adversary simulates all $2^b$ algorithms on this sequence.
  We consider two cases.
  If a request $p_i$ is rejected by all algorithms, it requests $p_1,p_2,\dots,p_i$ followed by
  requests that are all equal to $m$.
  For the first case, assume that there exists a request $p_i$, which is rejected by all $2^b$ algorithms.
  The remaining requests are all  $m$. This means that \alg gets a price of at most $p_{i-1}$ (the largest request that was not $p_{i}$) while \opt gets a price of $p_i$.
  Note that, if the first request is rejected, \alg gets a price of at most $m=p_0$.
  In this case, the competitive ratio for \alg is at least
  \[ \frac{p_i}{p_{i-1}}=\frac{m^{\frac{2^b+1-i}{2^b+1}}M^{\frac{i}{2^b+1}}}{m^{\frac{2^b+2-i}{2^b+1}}M^{\frac{i-1}{2^b+1}}}=\left(\frac{M}{m}\right)^{\frac{1}{2^b+1}}\;. \]
  
  Thus, \alg cannot obtain a competitive ratio which is better than $(M/m)^{\frac{1}{2^b+1}}$ if a request is rejected by all the algorithms.
  
  Next, we consider the second case.  Here, every request in $\sigma$ is accepted by some algorithm. Since there are $2^b$ requests in $\sigma$, it follows that all algorithms accept a price that 
  is at most $p_{2^b}$.  Since $2^b < n$, the adversary can still make a request.  The final request is then $M$.  The competitive ratio for \alg is therefore bounded from below by
  \[ \frac{M}{p_{2^b}}=\frac{M}{m^{\frac{1}{2^b+1}}M^{\frac{2^b}{2^b+1}}}=\left(\frac{M}{m}\right)^{\frac{1}{2^b+1}}\;. \]
  
  In both cases, \alg has a competitive ratio of at least $(M/m)^{\frac{1}{2^b+1}}$ as claimed by the theorem.
  \qed
\end{proof}

\section{Advice and Randomization}

Randomization is often used to improve the competitive ratio of online
algorithms (in expectation).   Here, the online player is allowed to base some
of its answers on a random source.  An oblivious adversary knows the algorithm,
but not the outcome of the random decisions.  To provide an improvement over the lower bound of deterministic
online algorithms for \OS, El-Yaniv et al.\ \cite{El-Yaniv2001} provided an
upper bound by presenting a randomized algorithm with an expected competitive ratio of
$\log_2(M/m)$.  Lorenz
et al.\ \cite{Lorenz2009} provided an asymptotically matching lower bound of
$(\log_2(M/m))/2$ for randomized online algorithms for \OS.

In this section, we compare the power of advice to the ability of an
online algorithm to access random bits for \OS. The competitive ratio of
online algorithms with advice (with an increasing number of advice bits) is shown in
\cref{fig:plot_advice_os}. 
We fixed a fluctuation ratio $M/m$, and we
highlighted the competitive ratio of the best deterministic algorithm, \ie, $(M/m)^{\frac{1}{2}}$,
and the corresponding upper (\ie, $\log_2(M/m)$) and lower (\ie, $\log_2(M/m)/2$) bounds of randomized algorithms for \OS.

\begin{figure}[t]
\begin{centering}
  \begin{tikzpicture}[scale=1.1]
    \begin{axis}[ 
	xlabel={{\small advice bits $b$}\rule{0cm}{4mm}},
	ylabel={{\small competitive ratio $c$}\rule[-5mm]{0cm}{1cm}},
	ylabel style={anchor=south},	
	ymin=0, ymax=11, 
	grid=both,
	grid style={line width=.1pt, draw=gray!10},
	major grid style={line width=.2pt,draw=gray!50},
	minor tick num=5,
	xticklabels={0,   $b^*$, $\log_2\left( \frac{M}{m} \right) $}, xtick={0, 1.503, 6.64},
	yticklabels={0, 1 ,$\frac{\log_2\left( \frac{M}{m} \right)}{2}$ , $\log_2\left( \frac{M}{m} \right)$, $\left(\frac{M}{m}\right)^\frac{1}{2}$}, ytick={0, 1,  3.32, 6.64, 10}
    ]   
    \addplot [dotted,thick, domain=0:10]{(100)^(1/2)};
    \addplot [dashed,thick, domain=0:10]{(100)^(1/(2^x+1))};
    \addplot [dotted,thick, domain=0:10]{ln(100)/ln(2)};
    \addplot [dotted,thick, domain=0:10]{((ln(100))/ln(2))/2};
    \addplot [dotted,thick, domain=0:10]{1};  
    \legend{, $\left( \frac{M}{m} \right)^{\frac{1}{2^b+1}}$}
  \end{axis}
  \end{tikzpicture}
  \caption{Plot comparing the competitive ratio of the online algorithm with advice with respect to the lower bound for deterministic and randomized algorithms.}
  \label{fig:plot_advice_os}
  \end{centering}
\end{figure}
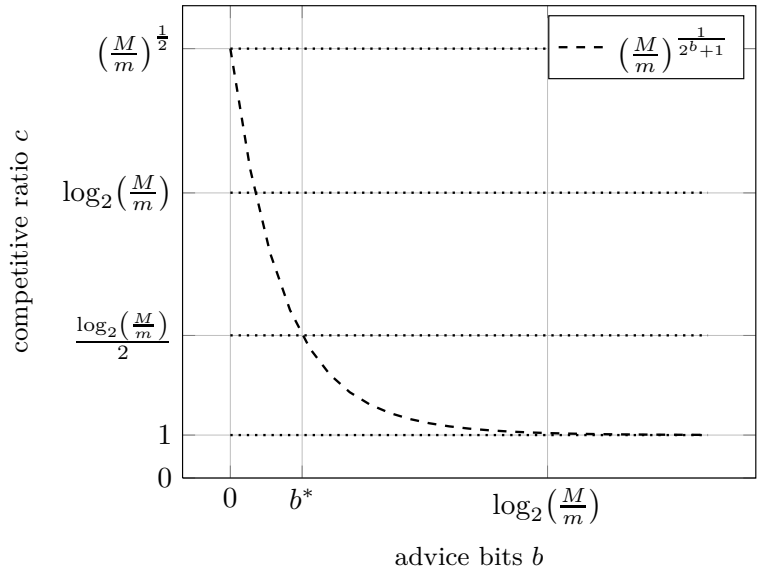

It is interesting to point out that, with the number of advice bits greater than 
\[ b^* = \log_2\left( \frac{\log_2 ( M/m )}{\log_2\left( \frac{ \log_2 ( M/m )}{2}\right)}-1\right)\;, \]
our online algorithm for \OS outperforms the lower bound of randomized online algorithms.
And as we increase the number of advice bits, the better the competitive ratio we get.
In the plot shown in \cref{fig:plot_advice_os}, we considered a fluctuation ratio $M/m = n$. 
Note that the competitive ratio is asymptotic to $1$, but 
it is actually possible to get an optimal solution with $\log_2 n$
advice bits.

\section{Conclusion and Future Work}

We studied the advice complexity of \OS and determined upper and lower
bounds on the advice complexity to achieve both optimality and $c$-competitiveness.  We presented
a tight lower bound  of $\log_2 n$ for the number of advice needed by any
online algorithm to obtain optimal solutions, as shown in \cref{thm:os_lb_opt1}. 
We also provided a strategy with $b$ bits of advice and achieved a tight bound
of $(M/m)^{\frac{1}{2^b+1}}$ for the competitive ratio as
shown in \cref{upper,lower}. 

We compared the power of advice and randomization in terms
of competitive ratio.  The comparison of the  competitive ratio is shown
in \cref{fig:plot_advice_os}.

For future work, it would be interesting to extend the results to the \OWT problem with
advice.  It is known that \OS and the \OWT are closely related.  In fact,
they are equivalent in the sense that, for every  randomized algorithm for \OS,
there exists an equivalent deterministic algorithm for \OWT
\cite{El-Yaniv2001}.  Although randomization significantly improved the
competitive ratio of algorithms for \OS, it can be shown that it cannot help to improve the
competitive ratio of algorithms for \OWT. It would be interesting to investigate the tradeoff
between advice and competitive ratio in \OWT.

\end{document}